\documentclass{article}
\usepackage{amsmath}
\usepackage{amssymb}
\usepackage{amsthm}
\newtheorem{mydef}{Definition}
\newtheorem{ex}{Example}

\newtheorem{thm}{Theorem}

\newtheorem{lem}{Lemma}
\newtheorem{rem}{Remark}
\newtheorem{cor}{Corollary}
\newtheorem{prop}{Proposition}
\newtheorem{alg}{Algorithm}
\usepackage[margin=1in]{geometry}
\usepackage{verbatim}

\newcommand{\N}{\mathbb{N}}
\newcommand{\Z}{\mathbb{Z}}

\newcommand{\F}{\mathbb{F}}
\newcommand{\ra}{\xrightarrow{}}

\newcommand{\mc}[1]{\mathcal{#1}}

\newcommand{\abs}[1]{\left| #1 \right|}
\newcommand{\pren}[1]{\left( #1 \right)}

\title{Error Correcting Codes}
\author{Priyank Deshpande }

\begin{document}

\maketitle

\textbf{Abstract} 
Here we present some revised arguments to a randomized algorithm proposed by Sudan to find the polynomials of bounded degree agreeing on a dense fraction of a set of points in $\F^{2}$ for some field $\F$.

\section{Introduction}
$\text{Here we will discuss some concepts in the field of error-correcting codes. }$

\begin{mydef}
    Given $\Sigma$ a collection of symbols, and $x,y \in \Sigma^{n}$. We define the hamming distance between $x$ and $y$ denoted as $HD(x,y)$ as $\abs{\{ i \in [n] : (x)_{i} \ne (y)_{i} \}}$. That is, the number of indices at which $x$ and $y$ differ. 
\end{mydef}

\begin{ex}
    Given $\Sigma = \{0,1,2\}$ and $x = ``201"$, and $y = ``222"$. We have that $HD(x,y) = 1$.
\end{ex}

\begin{mydef}
    Let $\Sigma$ be a collection of symbols and $n, k, \delta \in \Z$. We say $\mc{C} \subset \Sigma^{n}$ is a $[n,k,\delta]$ code if $\abs{\mc{C}} = \abs{\Sigma}^{k}$, and $\forall x,y \in \mc{C}, HD(x,y) \geq \delta$.
\end{mydef}

\begin{mydef}
    Let $\Sigma$ be a collection of symbols and $\mc{C}$ a $[n,k,\delta]$. If $\tau \in \Z : 2 \tau + 1 \leq \delta$, then we say $\mc{C}$ is a $\tau$ error correcting code. 
\end{mydef}

\begin{mydef}
    Let $F$ be a finite field of cardinailty $n$. Let $\mc{C}$ be a $[n,d+1,n-d]$ code of an alphabet $\Sigma$, we say $\mc{C}$ is a Reed-Solomon Code if $\mc{C} = \{ ``p(0) | p(w) | \dots | p(w^{|F|-1})" :  p(x) \in F[x], \deg(p) \leq d \}$. Here $|$ denotes string concatenation, and $w \in F$, is a generator of $F^{*}$. 
\end{mydef}

\begin{mydef}
    We will refer the maximum-likelihood decoding problem as the following task: Given a $[n,k,\delta]$ code, a string $s \in \Sigma^{n}$, a string in $c \in \mc{C}$ such that $HD(s,c) \leq HD(s,x) , \forall x \in \mc{C}$. We will refer to the list decoding problem as: Given a string $s\in \Sigma^{n}$, a $[n,k,\delta]$ code, and a parameter $\tau \in \N$, return all $c \in \mc{C}$ such that $HD(s,c) \leq \tau$,
\end{mydef}

\section{Algorithm}

\begin{rem}
    We will present a randomized algorithm by Sudan-'96, for the following problem: Given a field $F$, $\{(x_{i},y_{i})\}_{i=1}^{n} \subset F^{2}$ and parameters $t,d \in \N$, find all $f(x) \in F[x]$ such that $\abs{\{ i \in [n] : f(x_{i}) = y_{i})} \geq t$, and $\deg(f) \leq d$. 
\end{rem}

\begin{rem}
    We define concept of weighted degree which will be relevant to the randomized algorithm to be presented. Given $(w_{x} , w_{y}) \in \Z^{2}$ which we will call weights and a bivariate monomial in $x,y$, $c_{ij}x^{i}y^{j}$, we say the weighted degree of such a monomial is $i \cdot w_{x} + j \cdot w_{y}$. Given a bivariate polynomial $Q(x,y) \in F[x,y]$, we say the weighted degree of $Q(x,y) = \sum_{i,j} c_{ij} x^{i} y^{j}$ to be the maximum of the weighted degrees of its monomials.  
\end{rem}

\begin{alg}
    Define the following randomized algorithm: Let $\{(x_{i},y_{i})\}_{i \in [n]}$, $d,t \in \N$ be inputs to the algorithm, and $m,l$ parameters to be determined to optimize the algorithm. Then:
        \begin{itemize}
            \item Find a $P(x,y) \in F[x,y]$ such that $P(x,y)$ has weighted degree with weights $(1,d)$ at most $m+l \cdot d$, $P(x,y)$ is not identically zero, and $P(x,y)$ vanishes on $\{(x_{i},y_{i})\}_{i \in [n]}$. That is, $P(x_{i}, y_{i}) = 0, \forall i \in [n]$. $(1)$
            \item Factor $P(x,y)$ into irreducible polynomials in $F[x,y]$. $(2)$
            \item Check all functions $f(x) \in F[x]$ of degree at most $d$, such that $(y - f(x)) \mid P(x,y)$, and $f(x_{i}) = y_{i}$ for at least $t$ distinct choices of $i \in [n]$. $(3)$
        \end{itemize}
\end{alg}

\begin{rem}
    We will justify that this algorithm runs in polynomial time. 
\end{rem}

\begin{prop}
    The polynomial as described in step $(1)$ can be found in polynomial time, with respect to the size of the field, if such a polynomial exists.
\end{prop}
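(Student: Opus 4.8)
The plan is to recast the search for $P(x,y)$ in step $(1)$ as the problem of finding a nontrivial solution to a homogeneous system of linear equations over $F$, and then to invoke Gaussian elimination. First I would fix a monomial basis for the space of candidate polynomials: let $\mc{M} = \{ x^{i} y^{j} : i,j \geq 0, \ i + jd \leq m + ld \}$ be the set of monomials of weighted degree at most $m + ld$ with respect to the weights $(1,d)$, and write the unknown polynomial as $P(x,y) = \sum_{x^{i}y^{j} \in \mc{M}} c_{ij}\, x^{i} y^{j}$, treating the coefficients $c_{ij} \in F$ as the variables to be determined.

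Each vanishing condition $P(x_{k}, y_{k}) = 0$ for $k \in [n]$ is then a single homogeneous linear equation $\sum_{x^{i}y^{j} \in \mc{M}} c_{ij}\, x_{k}^{i} y_{k}^{j} = 0$ in the unknowns $\{c_{ij}\}$. Collecting all $n$ such conditions yields a homogeneous linear system $A\mathbf{c} = \mathbf{0}$, where $A$ is the $n \times \abs{\mc{M}}$ matrix whose $(k,(i,j))$ entry is $x_{k}^{i} y_{k}^{j}$. A nonzero admissible polynomial $P$ exists precisely when this system has a nontrivial solution $\mathbf{c} \neq \mathbf{0}$: since the elements of $\mc{M}$ are linearly independent as formal polynomials, a nonzero coefficient vector corresponds to a polynomial that is not identically zero, and conversely, so the reduction is faithful in both directions.

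Next I would bound $\abs{\mc{M}}$ to argue that the system has polynomial size. For each fixed exponent $j$ the constraint $i + jd \leq m + ld$ permits $i$ to range over $\{0, 1, \dots, m + (l-j)d\}$ whenever $m + (l-j)d \geq 0$, so $j$ ranges up to roughly $l + m/d$ and $\abs{\mc{M}}$ is bounded by a product of the form $(l + m/d + 1)(m + ld + 1)$, which is polynomial in $m$, $l$, and $d$. Since $n$ and these degree parameters are all at most polynomial in $\abs{F}$ in the relevant regime, both dimensions of $A$ are polynomial in $\abs{F}$. Forming $A$ requires computing the powers $x_{k}^{i} y_{k}^{j}$, which takes polynomially many field operations, each costing time polynomial in $\log \abs{F}$; running Gaussian elimination on $A$ over $F$ then decides whether a nontrivial kernel vector exists and, if so, returns one, all within $O(\mathrm{poly}(\abs{F}))$ time. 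Reading off the coefficients of such a kernel vector yields the desired $P(x,y)$.

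The main obstacle I anticipate is the bookkeeping in the monomial count: one must verify carefully that $\abs{\mc{M}}$ is genuinely polynomially bounded (rather than exponential) in the parameters, and confirm that the reduction is faithful so that the phrases ``nontrivial kernel vector'' and ``nonzero polynomial of bounded weighted degree vanishing on the points'' truly coincide. Everything else — the correctness and polynomial running time of Gaussian elimination over a field, together with the cost of field arithmetic — is standard.
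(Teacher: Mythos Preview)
Your proposal is correct and shares the paper's core reduction: both write $P(x,y)$ in the monomial basis constrained by the $(1,d)$-weighted degree bound and observe that the vanishing conditions $P(x_k,y_k)=0$ form a homogeneous linear system in the unknown coefficients $c_{ij}$. The difference is in how the system is solved. The paper's explicit bound comes from a brute-force enumeration over all coefficient tuples in $F$, giving $O(n\cdot |F|^{(m+ld)l})$, which is polynomial in $|F|$ only when $m,l,d$ are treated as fixed constants; it then remarks without detail that the system can also be solved in time polynomial in $n$. You instead invoke Gaussian elimination directly, which yields a bound polynomial in the matrix dimensions (hence in $n$ and the degree parameters) and is both tighter and more standard. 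Your version is therefore a cleaner and more complete execution of the same idea; nothing is missing.
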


\begin{proof}
    By the conditions imposed by the weighted degree constraint, we can write $P(x,y) \in F[x,y]$ as $P(x,y) = \sum_{j=0}^{l} \sum_{i}^{m+(l-j)d} c_{ij} x^{i} y^{j}$ because $j \leq l$, $i \leq m + (l-j)d$ implies that $(i,j) \cdot (1,d) \leq (m+(l-j)d,j) \cdot (1,d) = m + ld$, which is the weighted degree of $P(x,y)$. To find the polynomial which satifies the conditions in $(1)$, we require that $\sum_{j=0}^{l} \sum_{i=0}^{m+(l-j)d} c_{ij} (x_{k})^{i} (y_{k})^{j} = 0, \forall k \in [n]$. Let $|F| = N$. Using a brute force approach to determine the appropriate values of $c_{ij}$, we can obtain a solution in $O ( n \cdot N ^{(m+ld)l} )$, which is polynomial in $N$ for fixed parameters $l,m,d$. However, this can be solved in polynomial time with respect to the number of constraints $n$. 
\end{proof}

\begin{prop}
    If the parameters $m,l$ are such that $(m+1)(l+1) + d\binom{l+1}{2} > n$, then a function $P(x,y) \in F[x,y]$ as described in $(1)$ exists.
\end{prop}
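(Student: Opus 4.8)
The plan is to treat this as a homogeneous linear system in the coefficients $c_{ij}$ and invoke the basic fact that such a system has a nontrivial solution whenever the number of unknowns strictly exceeds the number of equations. As established in the previous proposition, any $P(x,y)$ of weighted degree at most $m + ld$ under weights $(1,d)$ can be written as $P(x,y) = \sum_{j=0}^{l}\sum_{i=0}^{m+(l-j)d} c_{ij} x^{i} y^{j}$. I would regard the coefficients $\{c_{ij}\}$ as the unknowns, and observe that the requirement that $P$ vanish on each of the $n$ points $(x_k,y_k)$ amounts to the $n$ conditions $\sum_{i,j} c_{ij} (x_k)^{i} (y_k)^{j} = 0$, each of which is linear and homogeneous in the $c_{ij}$.

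First I would count the number of unknowns, i.e.\ the number of admissible index pairs $(i,j)$. For each fixed $j \in \{0,\dots,l\}$ the index $i$ ranges over $\{0,\dots,m+(l-j)d\}$, contributing $m + (l-j)d + 1$ monomials. Summing over $j$ yields
\[
\sum_{j=0}^{l} \bigl( m + (l-j)d + 1 \bigr) = (m+1)(l+1) + d\sum_{j=0}^{l}(l-j).
\]
Since $\sum_{j=0}^{l}(l-j) = \sum_{k=0}^{l} k = \binom{l+1}{2}$, the total number of unknowns is exactly $(m+1)(l+1) + d\binom{l+1}{2}$, which is precisely the left-hand side of the hypothesized inequality. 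This is the heart of the argument: the bound in the statement is engineered to equal the dimension of the coefficient space.

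With the count in hand the conclusion is immediate from linear algebra. The $n$ vanishing conditions define a linear map from the coefficient space, of dimension $(m+1)(l+1) + d\binom{l+1}{2}$, into $F^{n}$; by rank-nullity its kernel is nontrivial as soon as the domain has dimension strictly greater than $n$, which is guaranteed by the hypothesis. Any nonzero kernel vector then yields a $P(x,y)$ that is not identically zero, has weighted degree at most $m + ld$ by construction, and vanishes on all $n$ points, as desired.

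I do not anticipate a serious obstacle, as this is a standard counting-versus-constraints comparison. The only point requiring care is the arithmetic of the coefficient count—correctly summing $\sum_{j=0}^{l}(l-j)$ to $\binom{l+1}{2}$ and confirming that the $+1$ from each inner range is accounted for, since an off-by-one in the range of $i$ or a mis-indexed outer sum would perturb the threshold and break the exact match with the stated bound.
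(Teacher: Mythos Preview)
Your proposal is correct and follows essentially the same approach as the paper: both set up the vanishing conditions as a homogeneous linear system $A\vec{x}=0$ in the coefficients $c_{ij}$, count the unknowns as $\eta=(m+1)(l+1)+d\binom{l+1}{2}$, and invoke the fact that $\eta>n$ forces a nontrivial null space. If anything, your explicit summation $\sum_{j=0}^{l}(m+(l-j)d+1)$ is more detailed than the paper's proof, which simply asserts the count without working through the arithmetic.
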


\begin{proof}
    Let $\eta = (m+1)(l+1) + d\binom{l+1}{2}$ Note that if $P(x,y) \in F[x,y]$ is defined as $P(x,y) = \sum_{j=0}^{l} \sum_{i=0}^{m+(l-j)d} c_{ij} x^{i} y^{j}$, then there are $\eta$ many $x_{ij}$'s. To find the polynomial $P(x,y)$, we need to solve the system $A \Vec{x} = 0$, where $\Vec{x}$ represents the $c_{ij}$, and $A$ has dimensions $n \times \eta$. So this amounts to finding the null space of $A$. Under the assumption that $\eta > n$, we have that $\dim(N(A)) \geq 1$, where $N(A)$ is the null space of $A$. Hence, we may choose a $y \in N(A) \setminus \{0\}$ to obtain the desired $c_{ij}$'s.
\end{proof}

\begin{prop}
    If $P(x,y) \in F[x,y]$ satisfies (1), and $f(x) \in F[x]$ satisfies $\abs{\{i \in [n] : f(x_{i}) = y_{i} \}} \geq t$, and $t > m + ld$, then $y-f(x)$ divides $P(x,y)$. 
\end{prop}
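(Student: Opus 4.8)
The plan is to collapse the bivariate divisibility claim to a univariate root count by restricting $P$ to the graph of $f$. Define the univariate polynomial $g(x) := P(x, f(x)) \in F[x]$. I would then argue in three moves: bound $\deg g$ using the weighted-degree hypothesis, show that the agreement condition forces $g$ to have strictly more roots than its degree permits (so that $g \equiv 0$), and finally recover the factor $y - f(x)$ from the vanishing of $g$ via polynomial division in $F[x][y]$.

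First I would bound $\deg g$. Writing $P(x,y) = \sum_{i,j} c_{ij} x^i y^j$, condition (1) states that every monomial with $c_{ij} \neq 0$ has weighted degree $i + jd \leq m + ld$. Substituting $y = f(x)$ and using $\deg f \leq d$, the image monomial $c_{ij} x^i f(x)^j$ has ordinary degree at most $i + jd \leq m + ld$. Maximizing over the monomials yields $\deg g \leq m + ld$. This is precisely the step that motivates the weights $(1,d)$: they are chosen so that substituting a polynomial of degree $d$ for $y$ converts weighted degree into ordinary degree without inflation.

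Next I would count the roots of $g$. Let $S = \{i \in [n] : f(x_i) = y_i\}$, so $\abs{S} \geq t$ by hypothesis. For each $i \in S$, condition (1) gives $g(x_i) = P(x_i, f(x_i)) = P(x_i, y_i) = 0$. The one point deserving care is that these $x_i$ be \emph{distinct} roots: if $i, i' \in S$ with $x_i = x_{i'}$, then $y_i = f(x_i) = f(x_{i'}) = y_{i'}$, so $(x_i,y_i) = (x_{i'},y_{i'})$; since the input is a set of distinct points, distinct indices in $S$ therefore yield distinct abscissae. Hence $g$ has at least $t$ distinct roots. Combined with $\deg g \leq m + ld < t$ and the fact that a nonzero polynomial over a field has at most $\deg g$ roots, this forces $g \equiv 0$.

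Finally I would pass from $g \equiv 0$ back to divisibility. Viewing $P$ as an element of $F[x][y]$ and noting that $y - f(x)$ is monic of degree one in $y$, the division algorithm gives $P(x,y) = (y - f(x)) Q(x,y) + R(x)$ with remainder $R(x) \in F[x]$ independent of $y$. Evaluating at $y = f(x)$ yields $R(x) = g(x) = 0$, so $(y - f(x)) \mid P(x,y)$, as desired. I expect the main obstacle to be conceptual rather than computational: the entire argument hinges on the chain $\deg g \leq m + ld < t$, and thus on verifying the degree bound of the first step carefully, since that is where the weighted-degree design and the threshold $t > m + ld$ conspire to make the root count decisive.
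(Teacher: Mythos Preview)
Your proposal is correct and follows essentially the same approach as the paper: substitute $y=f(x)$, bound the univariate degree by $m+ld$ via the $(1,d)$-weighted degree, count at least $t>m+ld$ roots to force $P(x,f(x))\equiv 0$, and then apply the division algorithm in $F[x][y]$ against the monic linear factor $y-f(x)$. Your version is in fact slightly more careful than the paper's in justifying that the agreeing indices yield \emph{distinct} abscissae (using that the input points form a set), a point the paper leaves implicit.
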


\begin{rem}
    Let $f(x) \in F[x]$. Denote the condition $\abs{\{i \in [n] : f(x_{i}) = y_{i} \}} \geq t$ as (*), and say $f(x)$ satifies (*) should it be the case
\end{rem}

\begin{proof}
    Let $f(x) \in F[x]$ satisfy (*). We claim that $P(x,f(x))$ is identically zero. Since $P(x,y)$ has $(1,d)$ weighted degree at most $m + ld$, we have that $P(x,f(x))$ (as a uni-variate polynomial) has degree at most $m+ld$ since $f(x)$ has degree at most $d$. However $P(x,f(x)) = 0$ whenever $x = x_{i}$ for some $i \in [n]$. If $f(x)$ satisfies (*), then there are at least $t$ zeros. Under the assumption that $t > m + ld$, we have that the number of roots of $P(x,f(x))$ is greater than its degree, so $P(x,f(x)) \equiv 0$. Consider $P(x,y) = P_{x}(y) = \sum_{j=0}^{l-1} P_{j}(x) y^{j} : P_{j}(x) \in F[x]$. Since $P_{x}(f(x)) = 0$, we have that $P_{x}(y)$ has a root $f(x)$. By the division algorithm, $(y-f(x))$ divides $P_{x}(y) = P(x,y)$, which is the claim. 
\end{proof}

\begin{rem}
    It remains to choose the parameters $m,l$ such that $t > m + ld$ and $(m+1)(l+1) + d \binom{l+1}{2} > n$, We can rephrase the condition to be $(m+1)(l+1) + d \binom{l+1}{2} \geq n + 1$ Observe that this condition yields $m \geq \frac{n + 1 - d \binom{l+1}{2}}{l+1} - 1$. Suppose that we want $t \geq m + ld + 1 \implies t \geq \frac{n + 1 - d \binom{l+1}{2}}{l+1} + ld = \frac{n+1}{l+1} - \frac{d l}{2} + dl = \frac{n+1}{l+1} + \frac{dl}{2}$. To find the minimum of this function with respect to $l$, we perform a first derivative which yields that $\frac{-(n+1)}{(l+1)^{2}} + \frac{d}{2} = 0 \implies l = \sqrt{\frac{2(n+1)}{d}} - 1$. Substituting the expression for $l$ in for the expression on $m$, we obtain that $m \geq \frac{n + 1 - d \binom{l+1}{2}}{l+1} - 1 = \frac{n+1}{l+1} - \frac{dl}{2} - 1 = \sqrt{\frac{d(n+1)}{2}} - \left( \sqrt{\frac{d(n+1)}{2}} - \frac{d}{2} \right) - 1 = \frac{d}{2} - 1$. This yields for the condition on $t$ that $t \geq m + ld + 1\geq \frac{d}{2} + d \cdot \left( \sqrt{\frac{2(n+1)}{d}} - 1 \right) = \frac{d}{2} + \sqrt{2(n+1)d} - d = \sqrt{2(n+1)d} - \frac{d}{2}$. This will allow us to make the following claim, which follows from the previous propositions. 
\end{rem}

\begin{cor}
    Given a field $F$ and a set of points $\{ (x_{i}, y_{i}) \}_{i \in [n]} \subset F^{2}$, and paramters $d,t \in \N$ such that $t \geq d \cdot \lceil \sqrt{\frac{2(n+1)}{d}} \rceil - \lfloor \frac{d}{2} \rfloor$, then there is a polynomial time algorithm in $n$ which finds all polynomials $f(x) \in F[x]$ which satisfy (*), and have degree at most $d$.
\end{cor}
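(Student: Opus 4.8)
The plan is to specialise the three preceding propositions to a single integer choice of the parameters $m$ and $l$, to verify that this choice simultaneously satisfies the hypothesis of Proposition 2 (guaranteeing that the interpolating polynomial $P$ exists) and that of Proposition 3 (guaranteeing the divisibility $y - f(x) \mid P$), and finally to argue that every step of the algorithm runs in time polynomial in $n$.

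First I would fix the parameters, following the optimisation carried out in the remark preceding the statement. Writing $L = \lceil \sqrt{2(n+1)/d} \rceil$, I would set $l = L - 1$ and $m = \lceil d/2 \rceil - 1$, both nonnegative integers when $d \geq 1$. I would then discharge the two required inequalities directly. For Proposition 3 I must confirm $t > m + ld$; using the hypothesis $t \geq dL - \lfloor d/2 \rfloor$ together with the identity $\lfloor d/2 \rfloor + \lceil d/2 \rceil = d$,
\[
t - (m + ld) \geq \pren{dL - \lfloor d/2 \rfloor} - \pren{\lceil d/2 \rceil - 1} - (L-1)d = 1 > 0 .
\]
For Proposition 2 I must confirm $(m+1)(l+1) + d\binom{l+1}{2} > n$; substituting the chosen values and using $\lceil d/2 \rceil \geq d/2$,
\[
(m+1)(l+1) + d\binom{l+1}{2} = L\lceil d/2 \rceil + \frac{dL(L-1)}{2} \geq \frac{dL^{2}}{2} \geq n+1 > n ,
\]
where the final bound uses $L \geq \sqrt{2(n+1)/d}$.

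With both inequalities established, I would chain the propositions. Proposition 2 gives a nonzero $P(x,y)$ of $(1,d)$-weighted degree at most $m + ld$ vanishing on the $n$ data points, and Proposition 1 produces it in time polynomial in $n$ by computing a nonzero vector in the null space of the associated $n \times \eta$ coefficient matrix, where $\eta = (m+1)(l+1) + d\binom{l+1}{2}$. After factoring $P$ in step $(2)$, Proposition 3 guarantees that every $f \in F[x]$ with $\deg f \leq d$ satisfying (*) yields a factor $y - f(x)$ of $P$; since $P$ has $y$-degree at most $l$, there are at most $l$ such linear-in-$y$ factors, and step $(3)$ enumerates them and keeps exactly those $g$ with $g(x_{i}) = y_{i}$ at no fewer than $t$ indices. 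This gives both completeness (no admissible $f$ is missed) and soundness (only polynomials genuinely satisfying (*) are reported).

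The main obstacle I anticipate is twofold. The delicate part is the integer bookkeeping: I must show that the single rounded choice $l = L - 1$, $m = \lceil d/2 \rceil - 1$ makes both inequalities hold at once while reproducing the exact threshold $t \geq dL - \lfloor d/2 \rfloor$ of the statement, which is precisely where the parity identity $\lfloor d/2 \rfloor + \lceil d/2 \rceil = d$ is essential. The second point requiring care is the running time of step $(2)$: the earlier propositions certify polynomiality only for steps $(1)$ and $(3)$, so I would need to invoke a genuine polynomial-time algorithm for bivariate factorisation over a finite field and cite it explicitly, rather than treating it as routine.
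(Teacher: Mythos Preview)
Your argument follows exactly the paper's strategy: fix integer values of $m$ and $l$ suggested by the preceding optimisation, check that they meet the hypotheses of Proposition~2 and Proposition~3, and conclude via the three steps of Algorithm~1. The one substantive difference is that you take $m=\lceil d/2\rceil-1$, whereas the paper takes $m=\lfloor d/2\rfloor-1$. Your choice is in fact the right one: with the paper's floor, the inequality $(m+1)(l+1)+d\binom{l+1}{2}\geq n+1$ can fail for odd $d$ when $2(n+1)/d$ happens to be a perfect square (for instance $d=3$, $n=5$ gives $L=2$, $m=0$, $l=1$, and $(m+1)(l+1)+d\binom{l+1}{2}=5<6$), and for $d=1$ it even gives $m=-1$. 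Using the ceiling costs nothing on the Proposition~3 side, since the parity identity $\lfloor d/2\rfloor+\lceil d/2\rceil=d$ you invoke still yields $t-(m+ld)\geq 1$, and it cleanly delivers $(m+1)(l+1)+d\binom{l+1}{2}\geq dL^{2}/2\geq n+1$ for all $d\geq 1$. Your added remark that step~(2) requires a separately cited polynomial-time bivariate factorisation algorithm is also a genuine point the paper glosses over.
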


\begin{proof}
    Setting $m = \lfloor \frac{d}{2} \rfloor - 1$ gives and $l = \lceil \sqrt{\frac{2(n+1)}{d}} \rceil - 1$ gives that $(m+1)(l+1) + d \binom{l+1}{2} \geq n + 1$. By proposition 2, a function $P(x,y)$ not identically zero satisfying that $P(x_{i}, y_{i}) = 0, \forall i \in [n]$ exists. Under the assumption that $t \geq d \cdot \lceil \sqrt{\frac{2(n+1)}{d}} \rceil - \lfloor \frac{d}{2} \rfloor > m + ld$, we have that $(y-f(x))$ divides $P(x,y)$ should such an $f(x) \in F[x]$ satisfy (*). By step 3 in the algorithm, $f(x)$ will be reported as output.
\end{proof}

\begin{mydef}
    Denote the tuple of $k$ variables $(x_{1}, \dots , x_{k}) = \Vec{x}$. Let $F$ be a field, $H \subset F$, and $g : H^{k} \ra F$. Given parameters $t,d \in \N$, output all polynomials $f$ of degree at most $d$ such that $\abs{\{ \Vec{x} \in H^{k} : f(\Vec{x}) = g(\Vec{x})\}} \geq t$. Here define the degree of $f$ to be the maximum degree of its monomials. 
\end{mydef}

\begin{rem}
    Let $f(x) \in F[\Vec{x}]$. We say $f(x)$ satisfies the condition (*) if $\abs{ \{ \Vec{x} \in H^{k} : f(\Vec{x}) = g(\Vec{x}) \} } \geq t$.
\end{rem}

\begin{mydef}
    Generalize the definition of weight degree to an $n$-variate polynomial. First, the $(w_{1}, \dots , w_{n})$ weighted degree of a monomial $\prod_{i=1}^{n} x_{i}^{d_{i}}$ is defined to be $\sum_{i=1}^{n} w_{i} d_{i}$. Define the $(w_{1}, \dots , w_{n})$ weighted degree of an $n$-variate polynomial to be the maximum of the weighted degrees of its monomials (which have non-zero coefficients).  
\end{mydef}

\begin{alg}
    Define the following algorithm. Let $F,H,k,t,d,g$ be as in definition $6$, and $m,l \in \N$ be parameters to be determined. 
    \begin{itemize}
            \item Find a $P(x_{1}, \dots, x_{k} ,y) \in F[x_{1}, \dots , x_{k}, y]$ such that $P(\Vec{x},y)$ has weighted degree with weights $(1,\dots , 1 ,d)$ at most $m+l \cdot d$, $P(\Vec{x},y)$ is not identically zero, and $P(\Vec{x},y)$ vanishes on $\{ (\Vec{x}, g(\Vec{x})) : \Vec{x} \in H^{k} \}$. That is, $P(\Vec{x}, g(\Vec{x})) = 0, \forall \Vec{x} \in H^{k}$.
            \item Factor $P(\Vec{x},y)$ into irreducible polynomials in $F[\Vec{x},y]$. $(2)$
            \item Check all functions $f(\Vec{x}) \in F[\Vec{x}]$ of degree at most $d$, such that $(y - f(\Vec{x})) \mid P(\Vec{x},y)$, and $f(\Vec{x}) = g(\Vec{x})$ for at least $t$ distinct choices of $\Vec{x} \in H^{k}$. $(3)$
        \end{itemize}
\end{alg}

\begin{rem}
    This is more or less a generalization of the uni-variate case. We will state conditions for the existence of $P(\Vec{x},y)$, and show that a polynomial $f(\Vec{x}) \in F[\vec{x}]$ satisfying (*) will be such that $y - f(\vec{x}) \mid P(\vec{x},y)$. Let $|H| = h$.
\end{rem}

\begin{prop}
    If $m+ld \geq k(h-1)$, then a non-trivial polynomial $P(\vec{x},y) \in F[\vec{x},y]$ vanishing on $S = \{ (\vec{x},f(\vec{x})) \in F^{k+1} : \vec{x} \in H^{k}\}$ exists. 
\end{prop}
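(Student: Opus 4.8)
The plan is to reduce the existence of $P$ to a linear-algebra dimension count, exactly as in the proof of Proposition 2. First I would write $P(\vec{x},y) = \sum_{\alpha} c_{\alpha} M_{\alpha}(\vec{x},y)$, where $\{M_{\alpha}\}$ ranges over all monomials $x_{1}^{a_{1}} \cdots x_{k}^{a_{k}} y^{j}$ whose $(1,\dots,1,d)$-weighted degree $a_{1} + \cdots + a_{k} + jd$ is at most $m + ld$. Let $N$ denote the number of such monomials, i.e.\ the number of free coefficients $c_{\alpha}$. Since the map $\vec{x} \mapsto (\vec{x}, g(\vec{x}))$ is injective, $S$ has exactly $\abs{H^{k}} = h^{k}$ points; imposing that $P$ vanish at each point of $S$ yields a homogeneous system $A\vec{c} = 0$ in the unknowns $\vec{c} = (c_{\alpha})$, where $A$ has $h^{k}$ rows and $N$ columns. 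As in Proposition 2, if $N > h^{k}$ then $\dim N(A) \geq N - h^{k} \geq 1$, and any nonzero null-space element gives the desired nontrivial $P$.

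The crux is therefore to show $N > h^{k}$ from the hypothesis $m + ld \geq k(h-1)$, which I would do by exhibiting admissible monomials explicitly. Consider first the grid monomials $x_{1}^{a_{1}} \cdots x_{k}^{a_{k}}$ with $y$-exponent $0$ and each $a_{i} \in \{0,1,\dots,h-1\}$: there are exactly $h^{k}$ of them, and each has total (hence weighted) degree at most $k(h-1) \leq m + ld$, so all are admissible. This already matches the number of constraints; to obtain a strict surplus I would adjoin at least one further admissible monomial lying outside the grid, for instance the monomial $y$, whose weighted degree is $d \leq m + ld$ (admissible whenever $l \geq 1$), or a power $x_{1}^{k(h-1)+1}$ when the slack $m + ld - k(h-1)$ permits. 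Combining the grid count with this extra monomial gives $N \geq h^{k} + 1 > h^{k}$, and the linear-algebra step above then concludes.

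I expect the main obstacle to be precisely this strict-inequality step, not the linear algebra, which is routine. The grid of $h^{k}$ monomials exactly saturates the bound, and these grid monomials are in fact linearly independent as functions on $H^{k}$ (multivariate Lagrange interpolation), so no nonzero $P$ can be assembled from the grid alone; at least one genuinely extra monomial of weighted degree at most $m + ld$ must be located, and this is where the hypothesis $m + ld \geq k(h-1)$ is consumed. I would take care over the degenerate corner $l = 0$ with $m + ld = k(h-1)$, where no extra monomial is available and the claim can degenerate; in the parameter regime relevant to Algorithm 2 one has $l \geq 1$, which guarantees the admissibility of $y$ and hence the strict count $N > h^{k}$.
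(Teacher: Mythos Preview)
Your proposal is correct and, like the paper, reduces to showing that the number $N$ of admissible monomials strictly exceeds the number $h^{k}$ of interpolation constraints, after which rank--nullity supplies a nonzero $P$. Where you diverge is in how you establish $N > h^{k}$. The paper writes $P(\vec{x},y) = \sum_{j=0}^{l} P_{j}(\vec{x})\,y^{j}$, obtains the closed form $N = \sum_{j=0}^{l} \binom{m+(l-j)d+k}{k}$, and then pushes the analytic lower bound $\binom{n}{k} \geq (n/k)^{k}$ through the sum to reach $N > \bigl(\tfrac{m+ld+k}{k}\bigr)^{k} \geq h^{k}$. You instead exhibit admissible monomials directly: the $h^{k}$ grid monomials $x_{1}^{a_{1}}\cdots x_{k}^{a_{k}}$ with $0 \le a_{i} \le h-1$, all of weighted degree at most $k(h-1) \le m+ld$, together with one extra monomial such as $y$. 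Your route is more elementary and is explicit about where the strict inequality is purchased; the paper's route is quicker once the binomial bound is quoted but buries the strictness inside the step that drops the $l$ trailing summands. Both arguments tacitly require $l \geq 1$ for strictness, a point you flag and the paper leaves implicit; your observation that the grid monomials alone are linearly independent as functions on $H^{k}$, and hence cannot by themselves yield a nontrivial $P$, is a genuine insight that the paper's analytic chain does not surface.
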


\begin{proof}
        We want to show that the number of monomials of a $k+1$ variate weighted degree polynomial is greater than $|H|^{k}$. Then we can apply a similar argument for there being a non-trivial solution to the system of linear equations $A\vec{z} = \vec{0}$ for finding the coefficients of such a polynomial $P(\vec{x},y)$. We observe that a polynomial of $(1,\dots,1,d)$ weighted degree $m+ld$ contains $\sum_{j=0}^{l} \binom{m+(l-j)d + k}{k}$ monomials. This is because $P(\vec{x},y) = \sum_{j=0}^{l} P_{j}(\vec{x})y^{j}$, where $P_{j}$ has total degree at most $m+ld-jd = m + (l-j)d$. Hence, let $M(Q)$ denote the number of distinct monomials of a polynomial $Q$. Then $M(P) = \sum_{j=0}^{l} M(P_{j}) = \sum_{j=0}^{l} \binom{m+(l-j)d + k}{k}$, which implies the claim. Now we would like $M(P) > h^{k}$. To do this, we provide some lower bounds. Observe that $M(P) = \sum_{j=0}^{l} \binom{m+(l-j)d + k}{k} \geq \sum_{j=0}^{l} \left( \frac{m + (l-j)d + k}{k} \right) ^{k} \geq \pren{\frac{m+ld+k}{k}}^{k} + l \cdot \pren{\frac{m+k}{k}}^{k} > \pren{\frac{m+ld+k}{k}}^{k} \geq \pren{\frac{k(h-1)+k}{k}}^{k} = h^{k}$, which proves the proposition.   
\end{proof}

\begin{prop}
    If $t > (m+ld)h^{k-1}$, where $t$ is the number of agreements of a $k$-variate polynomial $f$ on the set $S = \{ (\vec{x},g(\vec{x})) \in F^{k+1} : \vec{x} \in H^{k} \}$, then $y - f(\vec{x}) \mid P(\vec{x},y)$. 
\end{prop}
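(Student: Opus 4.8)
The plan is to mirror the univariate argument of Proposition 3: substitute $y = f(\vec{x})$ into $P$, argue that the resulting $k$-variate polynomial is forced to vanish identically by a counting argument, and then extract the divisibility from the factor theorem. Concretely, I would set $R(\vec{x}) := P(\vec{x}, f(\vec{x})) \in F[\vec{x}]$ and show that the hypothesis $t > (m+ld)h^{k-1}$ forces $R \equiv 0$.

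First I would bound the total degree of $R$. Writing $P(\vec{x},y) = \sum_{j=0}^{l} P_j(\vec{x}) y^j$ as in Proposition 4, the $(1,\dots,1,d)$-weighted degree constraint gives $\deg P_j \le m + (l-j)d$. Substituting $f$, the $j$-th term $P_j(\vec{x}) f(\vec{x})^j$ has total degree at most $(m + (l-j)d) + jd = m + ld$, since $\deg f \le d$. Hence $\deg R \le m + ld$, exactly as in the univariate case.

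Next comes the counting step, which is where the univariate proof must be upgraded. Because $P$ vanishes on $S$, we have $R(\vec{x}) = P(\vec{x}, g(\vec{x})) = 0$ at every $\vec{x} \in H^k$ where $f(\vec{x}) = g(\vec{x})$; by hypothesis there are at least $t$ such points. In one variable, ``more roots than the degree'' immediately forces the zero polynomial, but over $H^k$ I would instead invoke the Schwartz--Zippel lemma: a nonzero polynomial of total degree $D$ has at most $D\,h^{k-1}$ zeros in $H^k$, where $h = \abs{H}$. Thus if $R \not\equiv 0$, it would have at most $(m+ld)h^{k-1}$ zeros in $H^k$, contradicting $t > (m+ld)h^{k-1}$. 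Therefore $R \equiv 0$. I expect this Schwartz--Zippel input to be the main obstacle, since it is the only genuinely new ingredient beyond the univariate argument and must be applied in its zero-counting rather than probabilistic form.

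Finally I would conclude the divisibility. View $P(\vec{x},y)$ as an element of $F[\vec{x}][y]$, a polynomial in $y$ with coefficients in the ring $F[\vec{x}]$. Since $y - f(\vec{x})$ is monic in $y$, the division algorithm over $F[\vec{x}][y]$ applies, and the remainder upon dividing $P$ by $y - f(\vec{x})$ is obtained by evaluating at $y = f(\vec{x})$, namely $P(\vec{x}, f(\vec{x})) = R(\vec{x}) = 0$. Hence $(y - f(\vec{x})) \mid P(\vec{x}, y)$, which is the claim.
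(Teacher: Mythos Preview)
Your proposal is correct and follows essentially the same approach as the paper: substitute $y=f(\vec{x})$, bound the total degree of the resulting polynomial by $m+ld$, apply the Schwartz--Zippel bound in its zero-counting (contrapositive) form to force it to be identically zero, and then deduce $(y-f(\vec{x}))\mid P(\vec{x},y)$ via the factor/division theorem over $F[\vec{x}][y]$. Your write-up is in fact more careful than the paper's on two points---the explicit degree bound on $R$ and the use of monicity of $y-f(\vec{x})$ to justify the division step.
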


\begin{proof}
    We observe that $\theta_{f}(\vec{x}) = P(\vec{x},f(\vec{x}))$ is a $k$-variate polynomial of total degree $m+ld$. Let $Z(Q,S) = \{ x \in S : Q(x) = 0\}$. By the Schwartz-Zippel Lemma, if $\abs{Z(\theta_{f},H^{k})} > \deg(\theta_{f}) \cdot |H|^{k-1}$, then $\theta_{f} = P(\vec{x},f(\vec{x})) \equiv 0$. But $t = \abs{Z(\theta_{f},H^{k})} > \deg(\theta_{f}) \cdot |H|^{k-1} = (m+ld)h^{k-1}$ by assumption so $P(\vec{x},f(\vec{x})) \equiv 0$, and $y - f(\vec{x}) \mid P(\vec{x},y)$, since it is a root of $P(\vec{x},y)$.
\end{proof}

\begin{lem}
    (Schwartz-Zippel) Let $p(x_{1},\dots , x_{n}) \in F[x_{1},\dots , x_{n}]$ be a polynomial of total degree $d$ that is not equivalently $0$. Let $|S| \subset F$ be an arbitrary finite subset of the field $F$. Then $\mathbb{P}_{\vec{x}\in_{R} S^{k}}[p(\vec{x}) = 0] \leq \frac{d}{|S|}$.
\end{lem}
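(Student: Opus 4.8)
The plan is to prove this by induction on the number of variables $n$ (which the statement writes as $k$ in the probability subscript, with $S \subset F$ the finite subset in question). The base case $n = 1$ reduces to the familiar fact that a nonzero univariate polynomial of degree $d$ over a field has at most $d$ roots; hence a uniformly random $x \in S$ is a root with probability at most $d/|S|$, which is exactly the claimed bound.

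For the inductive step I would isolate the last variable. Write $p(x_1, \dots, x_n) = \sum_{i=0}^{e} x_n^{i} \, p_i(x_1, \dots, x_{n-1})$, where $e$ is the largest power of $x_n$ occurring in $p$, so that the leading coefficient $p_e(x_1, \dots, x_{n-1})$ is not identically zero. Since $p$ has total degree $d$ and the monomial $x_n^{e}$ already contributes degree $e$, the polynomial $p_e$ has total degree at most $d - e$. I would then sample $\vec{x}$ in two independent stages: first draw $(x_1, \dots, x_{n-1})$ from $S^{n-1}$, then draw $x_n$ from $S$, and split the analysis according to whether $p_e(x_1, \dots, x_{n-1})$ vanishes.

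The two key estimates are as follows. By the inductive hypothesis applied to $p_e$, which has $n-1$ variables and total degree at most $d-e$, we get $\mathbb{P}[p_e(x_1, \dots, x_{n-1}) = 0] \leq (d-e)/|S|$. Conditioned on the event $p_e(x_1, \dots, x_{n-1}) \neq 0$, the restriction of $p$ is, as a polynomial in $x_n$ alone, nonzero of degree exactly $e$, so the base case gives $\mathbb{P}[p(\vec{x}) = 0 \mid p_e \neq 0] \leq e/|S|$. Combining these via the law of total probability together with a union bound yields $\mathbb{P}[p(\vec{x}) = 0] \leq (d-e)/|S| + e/|S| = d/|S|$, as desired.

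The main obstacle is careful bookkeeping in the inductive step rather than any deep idea: I must verify that the leading coefficient $p_e$ is genuinely nonzero so that the inductive hypothesis applies to it, confirm the degree accounting $\deg p_e \leq d - e$, and set up the two-stage sampling so that the independence of the stages justifies bounding the conditional contribution by the base case while treating the complementary event through $p_e$. Once the last-variable decomposition is in place, the remainder is just a union bound, so the argument is short.
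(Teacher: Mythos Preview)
Your proposal is correct and follows essentially the same route as the paper: induction on the number of variables, isolating the highest power of $x_n$ with leading coefficient $q=p_e$ of total degree at most $d-e$, then bounding $\mathbb{P}[p=0]\le \mathbb{P}[p_e=0]+\mathbb{P}[p=0\mid p_e\ne 0]\le (d-e)/|S|+e/|S|$. Your version is in fact a bit more explicit about why $p_e\not\equiv 0$ and about the two-stage sampling, but the argument is the same.
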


\begin{proof}
    We proceed by induction. Considering the uni-variate case yields that $\mathbb{P}_{x \in_{R} S} [p(x) = 0] \leq \frac{d}{|S|}$. This is true because $p$ has degree $d$ and since $p$ is not identically $0$ we have that there are at most $d$ roots in $F$, and hence there are at most $d$ roots of $p$ in a finite subset $S \subset F$. Let $k = \deg_{x_{n}}(p)$. Then we may write $p(x) = x_{n}^{k} q(x_{1},\dots,x_{n-1}) + r(x_{1}, \dots, x_{n})$, where $q(x_{1},\dots,x_{n-1})$ has total degree at most $d-k$ and $r(x_{1},\dots,x_{n})$ has $x_{n}$ degree strictly less than $k$. For a $\vec{x} \in_{R} S^{k}$, we have that 
    $$
    \mathbb{P}[p(\vec{x}) = 0] = \mathbb{P}[p(\vec{x}) = 0 \mid q(\vec{x}) \ne 0] \mathbb{P}[q(\vec{x}) \ne 0] + \mathbb{P}[p(\vec{x}) = 0 \mid q(\vec{x}) = 0] \mathbb{P}[q(\vec{x}) = 0]
    $$
    by Bayes Formula. But $\mathbb{P}[p = 0] \leq \mathbb{P}[q = 0] + \mathbb{P}[p = 0 \mid q \ne 0] \leq \frac{d-k}{|S|} + \frac{k}{|S|} = \frac{d}{|S|}$, by the inductive hypothesis. This completes the proof.  
\end{proof}

\begin{rem}
    There is some subtlety to the fact that $\mathbb{P}[p = 0 \mid q \ne 0] \leq \frac{k}{|S|}$. This is because if we are given that $q \ne 0$, then $p(x_{1},x_{2}, \dots , x_{n})$ considered in that regard becomes a uni-variate polynomial of degree $k$ in the variable $x_{n}$ that is not identically $0$, as we are fixing that $x_{1},\dots,x_{n-1}$, and we then may apply the inductive hypothesis. The application of the Schwartz-Zippel Lemma as presented here to proposition $5$ is that the contrapositive of the statement of lemma $1$ is sufficient to deduce that $\theta_{f} \equiv 0$ as it has more than $\deg(f) \cdot h^{k-1}$ roots on $H^{k}$.
\end{rem}

\begin{thm}
    If the parameters $d,t,|H| = h, k \in \N$, are such that $\frac{t}{dh^{k-1}} > \frac{k(h-1)}{d}$, and the open interval $\pren{\frac{k(h-1)}{d},\frac{t}{dh^{k-1}}}$ contains a positive integer, then Algorithm 2 as described above outputs all the desired polynomials. 
\end{thm}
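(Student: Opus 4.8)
The plan is to reduce the theorem to a single admissible choice of the free parameters $m,l$ in Algorithm 2, chosen so that the hypotheses of Proposition 4 and Proposition 5 hold at once; once both hold, correctness of the algorithm is immediate. Recall that Proposition 4 produces a non-trivial vanishing $P(\vec{x},y)$ as soon as $m+ld \geq k(h-1)$, while Proposition 5 guarantees that every degree-$\leq d$ polynomial $f$ agreeing with $g$ on at least $t$ points satisfies $(y-f(\vec{x})) \mid P(\vec{x},y)$ as soon as $t > (m+ld)h^{k-1}$. So it suffices to exhibit integers $m \geq 0$ and $l \geq 1$ with
$$ k(h-1) \leq m+ld \quad\text{and}\quad (m+ld)h^{k-1} < t. $$

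First I would set $m = 0$ and take $l$ to be the positive integer guaranteed to lie in the open interval $\left( \frac{k(h-1)}{d}, \frac{t}{dh^{k-1}} \right)$. The hypothesis $\frac{t}{dh^{k-1}} > \frac{k(h-1)}{d}$ ensures this interval is nonempty, and the remaining hypothesis supplies the integer $l$. Multiplying the strict inequalities $\frac{k(h-1)}{d} < l < \frac{t}{dh^{k-1}}$ by $d$ yields $k(h-1) < ld < \frac{t}{h^{k-1}}$, so with $m+ld = ld$ both displayed conditions hold (the left one even strictly). Proposition 4 then produces a non-trivial $P(\vec{x},y)$ of $(1,\dots,1,d)$-weighted degree at most $ld$ vanishing on $\{(\vec{x},g(\vec{x})) : \vec{x} \in H^{k}\}$, and, exactly as in the proof of Proposition 4 (mirroring the univariate Propositions 1 and 2), such a $P$ is obtained in step $(1)$ by extracting a nonzero vector from the null space of the associated coefficient system.

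Next I would verify that step $(3)$ recovers every desired polynomial. Let $f(\vec{x}) \in F[\vec{x}]$ have degree at most $d$ and satisfy (*). Since $t > ld\cdot h^{k-1} = (m+ld)h^{k-1}$, Proposition 5 gives $(y-f(\vec{x})) \mid P(\vec{x},y)$. Because $y - f(\vec{x})$ is linear in $y$ it is irreducible in $F[\vec{x},y]$, so it appears among the irreducible factors computed in step $(2)$; step $(3)$ therefore tests it, confirms $\deg f \leq d$ and that $f$ agrees with $g$ in at least $t$ places, and outputs $f$. Conversely, step $(3)$ outputs only polynomials passing precisely these checks, so the output set is exactly the collection of degree-$\leq d$ polynomials satisfying (*).

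The only genuinely delicate point will be the simultaneous satisfaction of the two parameter constraints, which pull in opposite directions: Proposition 4 wants $m+ld$ large (to force enough monomials, hence a nonzero solution), whereas Proposition 5 wants it small (so that $\theta_f = P(\vec{x},f(\vec{x}))$ is forced to have more roots than its degree and thus vanishes). The theorem's hypothesis is engineered exactly so that an admissible value $ld$ sits strictly between $k(h-1)$ and $t/h^{k-1}$, and the choice $m = 0$ realizes that value as a legal weighted degree. I would finally double-check the two boundary issues this choice raises: that $m = 0$ is permissible in Proposition 4 (it is, since its counting lower bound only needs $m \geq 0$), and that the integer is required to be positive so that $l \geq 1$, ensuring $P$ genuinely depends on $y$ and the factor $y - f(\vec{x})$ is meaningful.
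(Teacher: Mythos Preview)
Your proposal is correct and follows essentially the same approach as the paper: set $m=0$, take $l$ to be the positive integer in the open interval $\bigl(\frac{k(h-1)}{d},\frac{t}{dh^{k-1}}\bigr)$, and invoke Propositions~4 and~5 to conclude. You add some helpful extra care (irreducibility of $y-f(\vec{x})$, the converse direction of step~(3), and the boundary checks on $m=0$ and $l\geq 1$), but the structure is the same.
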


\begin{proof}
    Note that to obtain the non-trivial polynomial $P(\vec{x},y) \in F[x_{1},\dots, x_{k},y]$ which vanishes on $S$, we need by proposition $4$ that $m+ld \geq k(h-1)$. Similarly, to ensure that $f(\vec{x}) \in F[x_{1},\dots,x_{k}]$ of degree at most $d$ having at least $t$ agreements on $S$ satisfies that $y - f(x) \mid P(\vec{x},y)$, we require that $(m+ld) < \frac{t}{h^{k-1}}$. Hence, we want $m,l \in \N$ such that $\frac{t}{h^{k-1}} > m+ld > k(h-1)$. Considering simpler case of setting $m = 0$, and finding the appropriate $l$, we obtain that such an $l$ exists precisely when $I = \pren{\frac{k(h-1)}{d},\frac{t}{dh^{k-1}}}$ contains a positive integer. Assuming that the given parameters are such that $l$ exists, by proposition 4, we obtain a non-zero polynomial vanishing on $S$. and by proposition $5$, we have that a polynomial $f \in F[\vec{x}]$ of degree  at most $d$ having at least $t$ agreements on $S$ will have $y-f$ divide $P(\vec{x},y)$. Hence, Algorithm 2 will return the desired polynomials. 
\end{proof}

\section{References}

\begin{itemize}
    \item M. Sudan, Decoding of Reed-Solomon Codes Beyond the Error Correction Bound. http://people.csail.mit.edu/madhu/papers/1996/reeds-journ.pdf
\end{itemize}

\end{document}